\newcommand{\com}[1]{}
\newcommand{\seclab}[1]{\label{sec:#1}}
\newcommand{\thmlab}[1]{{\label{theo:#1}}}
\newcommand{\thmref}[1]{Theorem~\ref{theo:#1}}
\newcommand{\lemlab}[1]{\label{lemma:#1}}
\newcommand{\lemref}[1]{Lemma~\ref{lemma:#1}}
\newcommand{\figlab}[1]{\label{fig:#1}}
\newcommand{\figref}[1]{Figure~\ref{fig:#1}}
\providecommand{\pth}[2][\!]{#1\left({#2}\right)}
\renewcommand{\O}[1]{\ensuremath{{\mathcal{O}\pth{#1}}}}
\newcommand{\Frechet}{Fr\'echet\xspace}
\newcommand{\etal}{\textit{e{}t~a{}l.}\xspace}
\providecommand{\eps}{{\varepsilon}}%
\newcommand{\distFr}[2]{\ensuremath{d_F\pth{#1,#2}}}
\newcommand{\distDF}[2]{\ensuremath{d_{dF}\pth{#1,#2}}}
\title{On the complexity of the middle curve problem%\footnote{Supported by our friends.}
} 
\titlerunning{On the complexity of the middle curve problem}%optional, please use if title is longer than one line
\author{Maike Buchin}{Ruhr-Universit\"at Bochum, Germany }{maike.buchin@rub.de}{}{}
\author{Nicole Funk}{Department of Computer
        Science, TU Dortmund, Germany }{nicole.funk@tu-dortmund.de}{}{}
\author{Amer Krivo\v{s}ija}{Department of Computer
        Science, TU Dortmund, Germany }{amer.krivosija@tu-dortmund.de}{}{}%TODO mandatory, please use full name; only 1 author per \author macro; first two parameters are mandatory, other parameters can be empty. Please provide at least the name of the affiliation and the country. The full address is optional
\authorrunning{M. Buchin, N. Funk and A. Krivo\v{s}ija}%TODO mandatory. First: Use abbreviated first/middle names. Second (only in severe cases): Use first author plus 'et al.'
\keywords{middle curve, Fr\'echet distance, computational complexity, approximation algorithm}%TODO mandatory; please add comma-separated list of keywords
\begin{document}

\maketitle

\begin{abstract}
 For a set of curves, Ahn et al. \cite{ahnetal16} introduced the notion of a \emph{middle curve} and gave algorithms computing these with run time exponential in the number of curves. Here we study the computational complexity of this problem: we show that it is NP-complete and give approximation algorithms.
\end{abstract}

\section{Introduction}

Consider a group of birds migrating together. Several of these birds are GPS-tagged to analyze their behavior. The resulting data is a set of sequences of their positions. Such a sequence of data points can be interpreted as a polygonal curve. We want to represent the movement of the whole group, for instance to compare it to other groups or species. For this, we use a representative curve. Such a representative curve is also useful in other applications, such as the analysis of handwritten text or speech recognition.
%\figref{fig:intro} shows an example. % of possible applications.

There have been a few different approaches of defining such a representative curve. Buchin~\etal~\cite{median} defined the median level of curves as only using parts input curves, where the median can change directions where two input curves cross paths. Har-Peled and Raichel~\cite{hpr-fd} defined a mean curve, which can be chosen freely and minimizes the distance to the input curves. They gave an algorithm exponential in the number of curves for computing this.

Another approach is a version of the $(k,\ell)$-center problem, which asks for a set of $k$ center curves of complexity at most $\ell$ for which the distance of each input curve to its nearest center is minimized. In particular, the $(1,\ell)$-center problem asks for only one such center curve. The $(k,\ell)$-center problem for curves was first introduced by Driemel~\etal~\cite{dks16} and further analyzed by Buchin~\etal \cite{buchinsoda19} and Buchin~\etal \cite{bds-average}.

However, none of these representative curves use only actual data points of the GPS tracks. This could lead to the representative curves containing positions that the moving entities (e.g. birds) could not have visited.
As the data points in the input curves are more reliable Ahn \etal \cite{ahnetal16} defined the \textit{middle curve} to only use these points. For a more accurate representation of the original curves, Ahn \etal \cite{ahnetal16} define three variants of the middle curve. We use their definition of a middle curve in this paper.

%\begin{figure}
%\subfloat[GPS tracks of Turkey movements taken from \cite{birds} \label{fig:intro:birds}]{\includegraphics[page=1, width = .5\textwidth]{figs/birds}}
%\subfloat[4 different handwriting samples of the word "curve" \label{fig:intro:writing}]{\includegraphics[page=2, width = .5\textwidth]{figs/writing}}
%\caption{Examples of possible data sets for curve problems}
%\figlab{fig:intro}
%\end{figure}

\subparagraph*{Related work}
Ahn~\etal \cite{ahnetal16} presented algorithms for all three variants of the middle curve problems, whose running time is exponential in the number of input curves.
For several representative curve problems it is known that they are NP-hard, such as $(k,\ell)$-center \cite{buchinsoda19, dks16}, minimum enclosing ball \cite{buchinsoda19}, $(k,\ell)$-median \cite{dks16}, $1$-median under \Frechet and dynamic time warping distance \cite{bds-average,bulteau14}. 
Some problems are NP-hard even to approximate better than a constant factor, e.g. the $(k,\ell)$-center problem \cite{buchinsoda19}.
Similarly, Buchin \etal \cite{curvesimpli} showed, that assuming the Strong Exponential Time Hypothesis (SETH) the \Frechet distance of $k$ curves of complexity $n$ each cannot be computed significantly faster than $\O{n^k}$ time.

\subparagraph*{Our results}
We prove NP-completeness of the \textsc{Middle Curve} problem presented by Ahn \etal \cite{ahnetal16}. Next we define a parameterized version of the problem, and present a simple exact algorithm as well as an $(2+\eps)$-approximation algorithm for the parameterized problem.

\section{Preliminaries}

A polygonal curve $P$ is given by a sequence of vertices $\langle p_1, \dots, p_m \rangle$ with $p_i$ in $\mathbb{R}^d$, $1\leq i\leq m$, and  for $1 \leq i < m$ the pair of vertices $(p_i, p_{i+1})$ is connected by the straight line segment $\overline{p_i p_{i+1}}$. We call the number of vertices $m$ of the curve its \textit{complexity}. Let the input consist of $n$ polygonal curves $\mathcal{P} = \{ P_1, \dots, P_n\}$, each of complexity $m$.

\subparagraph*{\Frechet Distance}
We define the discrete \Frechet distance of two curves $P'=\langle p'_1, \ldots, p'_{m'}\rangle$ and $P''=\langle p''_1, \ldots, p''_{m''}\rangle$ as follows: we call a \emph{traversal} $T$ of $P'$ and $P''$ a sequence of pairs of indices $(i,j)$ of vertices $(p'_i,p''_j)\in P'\times P''$ such that
\begin{enumerate}
\item[i)] the traversal $T$ begins with $(1, 1)$ and ends with $(m', m'')$, and
\item[ii)] the pair $(i, j)$ of $T$ can be followed only by one of $(i+1, j)$, $(i, {j+1})$, or $({i+1}, {j+1})$.
\end{enumerate}
We note that every traversal is monotone. Denote $\mathcal{T}$ the set of all traversals $T$ of $P'$ and $P''$. The discrete \Frechet distance between $P'$ and $P''$ is defined as:
\begin{equation*}
\distDF{P'}{P''} = \min_{T \in \mathcal{T}} \max_{(i,j) \in T} \| p_i - q_j \|_2.
\end{equation*}
We call the set of pairs of vertices $(p',p'')\in P'\times P''$ that realize $\distDF{P'}{P''}$ a \emph{matching}, and say that these pairs of vertices are matched.

A related similarity measure is the continuous \Frechet distance. Let $\pi':[0,1]\rightarrow P'$ and $\pi'':[0,1]\rightarrow P''$ be two continuous functions on $[0,1]$ such that $\pi'(0)=p'_1$, $\pi'(1)=p'_{m'}$, $\pi''(0)=p''_1$, and $\pi''(1)=p''_{m''}$, and such that $\pi'$ and $\pi''$ are monotone on $P'$ and $P''$ respectively. Let $\mathcal{H}$ be the set of continuous and increasing functions $f:[0,1]\rightarrow [0,1]$ with $f(0)=1$ and $f(1)=1$. The continuous \Frechet distance between %(polygonal) curves
$P'$ and $P''$ is defined as:
\begin{equation*}
\distFr{P'}{P''} = \inf_{f \in \mathcal{H}} \max_{t \in [0,1]}  \|P'(\pi'(f(t))) - P''(\pi''(t))\|_2.
\end{equation*}
We can overload the notion, and say that the function mapping $P'$ and $P''$ that realizes $\distFr{P'}{P''}$ is a matching. 
%A related similarity measure is the continuous \Frechet distance $d_F$, we refer to \cite{altgodau} and the full version for details. 

Note that by definition the discrete \Frechet distance of $P'$ and $P''$ is an upper bound for the continuous \Frechet distance, as the traversal $T$ realising $\distDF{P'}{P''}$ can be extended into mapping between $P'$ and $P''$. Both $d_{DF}$ and $d_F$ are metrics.

\bigskip
\subparagraph*{Middle Curve}
%Ahn \etal \cite{ahnetal16} introduced three versions of middle curves using discrete \Frechet distance.
Given a set of $n$ polygonal curves $\mathcal{P}$, a value $\delta \geq 0$, and a distance measure $\gamma$ for polygonal curves. We use $\gamma=d_{DF}$ as in \cite{ahnetal16}, for the continuous \Frechet distance $d_F$ the definitions hold verbatim. A \textbf{middle curve} at distance $\delta$ to $\mathcal{P}$ is a curve $M = \langle m_1, \dots, m_{\ell}\rangle$ with vertices $m_i \in \bigcup_{P_j \in \mathcal{P}} \bigcup_{p\in P_j} \lbrace p\rbrace$, $1\leq i\leq \ell$, s.t. $\max\{\distDF{M}{P_j} \colon P_j \in \mathcal{P}\} \leq \delta$ holds. 

If the vertices of a middle curve $M$ respect the order given by the curves of $\mathcal{P}$, then we call $M$ an \textbf{ordered middle curve}. Formally, for all $1\leq j\leq n$, if the vertex $m_i\in M$ is matched to $p_k\in P_j$ realizing $\distDF{M}{P_j}$, then for the vertices $m_{i'}\in M$, $i<i'$, it holds that $m_{i'}\in \left( \bigcup_{P_x \in \mathcal{P}\setminus P_j} \bigcup_{p\in P_x} \lbrace p\rbrace\right) \cup \left( \bigcup \lbrace p_{k'} : p_{k'}\in P_j, k'>k \rbrace \right)$. If the vertices of $M$ are matched to themselves in their original curves $P\in \mathcal{P}$ in the matching realizing $\distDF{M}{P}\leq\delta$, we have a \textbf{restricted middle curve}.
Note that an ordered middle curve is a middle curve, and a restricted middle curve is ordered as well.

We define the decision problem corresponding to finding such a curve. Given a set of polygonal curves $\mathcal{P} = \lbrace P_1, \ldots, P_n\rbrace$ and a $\delta \geq 0$ as parameters. \textsc{Unordered Middle Curve} problem returns \textsc{true} iff there exists a middle curve $M$ at distance $\delta$ to $\mathcal{P}$. The \textsc{Ordered Middle Curve} and \textsc{Restricted Middle Curve} returns \textsc{true} iff there exists an ordered and a restricted middle curve respectively at distance $\delta$ to $\mathcal{P}$.

Ahn \etal \cite{ahnetal16} presented dynamic programming algorithms for each variant of the middle curve problem. The running times of these algorithms for $n \geq 2$ curves of complexity at most $m$ are $O(m^{n} \log m)$ for the unordered case, $O(m^{2n})$ for the ordered case, and $O(m^n \log^n m)$ for the restricted middle curve case. All three cases have running time exponential in $n$, yielding the question if there is a lower bound for these problems.
In the following section we prove that the \textsc{Middle Curve} problem is NP-complete.

\section{NP-completeness}
\seclab{sec:proof}

The technique for the proof that all variants of the \textsc{Middle Curve} are NP-hard is based on the proof by Buchin \etal \cite{buchinsoda19} and Buchin, Driemel, and Struijs \cite{bds-average} for the NP-hardness of the Minimum enclosing ball and 1-median problems for curves under \Frechet distance. Their proof is a reduction from the \textsc{Shortest Common Supersequence} (\textsc{SCS}), which is known to be NP-hard \cite{SCS}. \textsc{SCS} problem gets as input a set $\mathcal{S} = \{S_1, \ldots, S_n\}$ of $n$ sequences over a binary alphabet $\Sigma = \{A, B\}$ and $t\in \mathbb{N}$. \textsc{SCS} returns \textsc{true} iff there exists a sequence $S^*$ of length at most $t$, that is a supersequence of all sequences in $\mathcal{S}$.

Our NP-hardness proof differs from the proof of \cite{buchinsoda19,bds-average} in three aspects. First, the mapping of the characters of the sequence is extended by additional points. Second, in order to validate all three variants of our problem, the conditions of the restricted middle curve have to be fulfilled, i.e. each vertex has to be matched to itself. Third, our representative curve is limited to the vertices of the input curves. Due to the hierarchy of the middle curve problems we show the reductions from \textsc{SCS} to the \textsc{Restricted Middle Curve}, and from \textsc{Unordered Middle Curve} to \textsc{SCS}. %We get the following theorem.

Given set $\mathcal{S}=\{S_1, \ldots, S_n\}$ of sequences over $\Sigma = \{A,B\}$ and $t\in\mathbb{N}$, defining a \textsc{SCS} instance that returns \textsc{true}, we construct for each sequence $S_i\in \mathcal{S}$ a polygonal curve in one-dimensional space, and therewith a \textsc{Middle Curve} instance. We use the following points in $\mathbb{R}$:
\begin{equation}
	\begin{aligned}
		p_{-3} = -3, & & p_{3} = 3, & & p_{-1} = -1, & & p_{1} = 1, \text{ and} \\
		p_0 = 0, & & p_2 = 2, & & p_{-2} = -2.
	\end{aligned}
\end{equation}
Each character in a sequence $S_i\in \mathcal{S}$ is mapped to a curve over $\mathbb{R}$ as follows:
\begin{equation}
	\begin{aligned}
		A &\to & p_0 (p_{-1} p_{1})^t p_{-2} p_{-3} p_{-2} (p_{1} p_{-1})^t p_0 ,\\
		B &\to & p_0 (p_{1} p_{-1})^t p_{2} p_{3} p_{2} (p_{-1} p_{1})^t p_0.
	\end{aligned}
\end{equation}
The curve $\eta(S_i)$ representing the sequence $S_i \in \mathcal{S}$ is constructed by concatenating the curves resulting from each character's mapping. The set of all resulting curves is denoted by $G = \{\eta(S_i) \colon S_i \in \mathcal{S}\}$. We call the subcurves $p_{-2} p_{-3} p_{-2}$ and $p_{2} p_{3} p_{2} $ \textit{letter $A$} and \textit{letter $B$ gadgets} respectively, and the subcurves between two letter gadgets (or at the beginning and at the end of curves) consisting of $p_{-1}$, $p_{1}$, and $p_0$ \textit{buffer gadgets}. 

We define the set $I_t = \{(a,b) \in \mathbb{N}^2 \colon a,b \geq 0, a+b = t \}$. A pair $(a,b)\in I_t$ represents the number of $A$'s and $B$'s in a possible supersequence of length $t$. For some $(a,b) \in I_t$  we construct the curves $A^a$ and $B^b$ in $\mathbb{R}$ with
\begin{equation}
	\begin{aligned}
	A^a &= p_{1}(p_{-3} p_{1})^a \\
	B^b &= p_{-1}(p_{3} p_{-1})^b.
	\end{aligned}
\end{equation}
We use these curves to construct the \textsc{Middle Curve} instance $(G \cup \{A^a, B^b\}, 1)$ for a pair $(a,b) \in I_t$. 
We prove that the \textsc{SCS} instance $(\mathcal{S},t)$ returns \textsc{true} if and only if there exists a pair $(a,b) \in I_t$ such that $(G \cup \{A^a, B^b\}, 1)$ is a \textsc{Middle Curve} instance that returns \textsc{true}. An example for this construction is given in \figref{fig:reduction}.

\begin{figure}
\hspace{-3.5mm}
\begin{tabular}{cc}
\includegraphics[page=2, width = .48\textwidth]{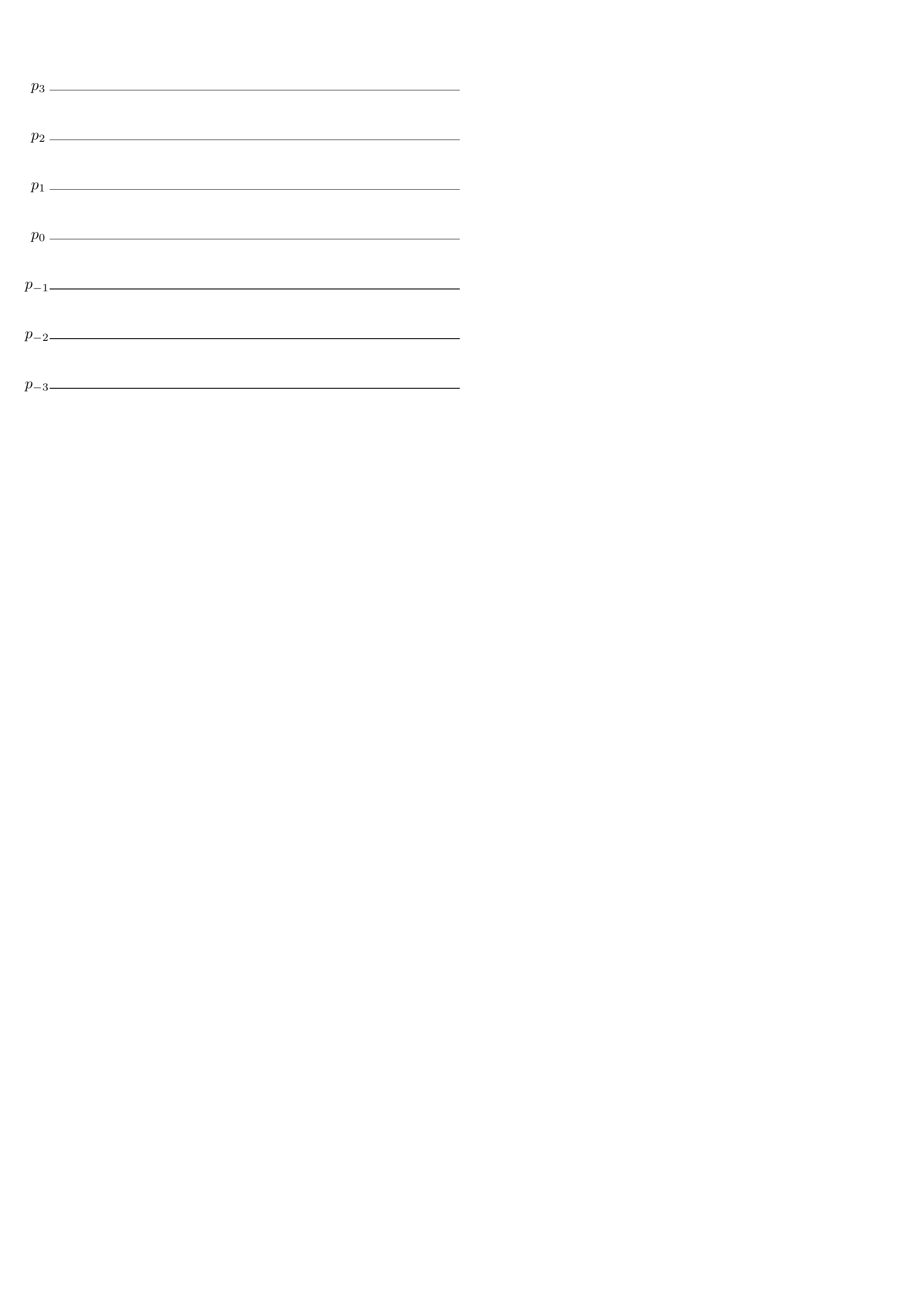}&\includegraphics[page=3, width = .48\textwidth]{reduction}\\
\textbf{(a)} Curves $G = \{\eta(AB) , \eta(BB)\}$
&\textbf{(b)} Curves $A^1$ and $B^2$
\end{tabular}
\vspace{2mm}

\hspace{-3.5mm}
\begin{tabular}{cc}
\includegraphics[page=5, width = .48\textwidth]{reduction}&\includegraphics[page=6, width = .48\textwidth]{reduction}\\
\textbf{(c)} Curve $M$ representing the sequence $ABB$
&\textbf{(d)} Set $G \cup \{A^1, B^2\} \cup \{M\}$
\end{tabular}
\caption{Construction of the \textsc{Middle Curve} instance $(G \cup \{A^1, B^2\}, 1)$ for the \textsc{SCS} instance $(\{AB, BB\} , 3)$}
\figlab{fig:reduction}
\end{figure}

We consider the discrete \Frechet distance case first.

\begin{lemma}
\lemlab{Lemma:hin}
If $(\mathcal{S},t)$ is a \textsc{SCS} instance returning \textsc{true}, then there exists a pair $(a,b) \in I_t$ such that $(G \cup \{A^a, B^b\}, 1)$ is a \textsc{Restricted Middle Curve} instance for the discrete \Frechet distance that returns \textsc{true}.
\end{lemma}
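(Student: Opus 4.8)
The plan is to turn the supersequence directly into a restricted middle curve. Assume $(\mathcal{S},t)$ returns \textsc{true}; fix a common supersequence $S^*$ of $S_1,\dots,S_n$ with $|S^*|\le t$ and pad it (inserting characters keeps a supersequence a supersequence) to a common supersequence $S^{**}=c_1\cdots c_t$ of length exactly $t$. Let $a,b$ be the numbers of $A$'s and $B$'s in $S^{**}$; then $(a,b)\in I_t$. Define the curve
\[ M \;=\; \bigl\langle\, p_0,\ \sigma(c_1),\ p_0,\ \sigma(c_2),\ p_0,\ \dots,\ \sigma(c_t),\ p_0 \,\bigr\rangle ,\qquad \sigma(A)=p_{-2},\ \ \sigma(B)=p_{2}. \]
Its vertices lie in $\{p_{-2},p_0,p_2\}$, each of which occurs in $G$ (inside a letter $A$ gadget, a buffer gadget, and a letter $B$ gadget respectively), so $M$ is a legitimate candidate. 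In essence $M$ is the analogue of $\eta(S^{**})$ in which each letter gadget has been shrunk to its single matchable vertex; one cannot keep the full gadgets, since then $p_{-3}$ would be too far from $B^b$ and $p_3$ too far from $A^a$.

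Next I would verify $\distDF{M}{P}\le 1$ for each $P$ in the instance by constructing a cost-$\le 1$ traversal. Against $\eta(S_i)$, fix a monotone embedding of $S_i$ into $S^{**}$ and traverse so that each block $\sigma(c_j)$ at a used position is held against the matching letter gadget of $\eta(S_i)$ (a vertex $p_{-2}$ against $p_{-2}p_{-3}p_{-2}$, or $p_2$ against $p_2p_3p_2$; distances $0,1,0$), each block $\sigma(c_j)$ at a skipped position is matched to a single vertex $p_{-1}$ (if it is $p_{-2}$) or $p_{1}$ (if it is $p_2$) inside a buffer gadget (distance $1$), and the $p_0$'s of $M$ absorb the remaining buffer vertices (values in $\{p_{-1},p_0,p_1\}$, distance $\le 1$); the buffer gadgets contain enough vertices $p_{\pm1}$ ($\ge 2t$ of them between two consecutive used gadgets) to keep this monotone. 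Against $A^a=p_1(p_{-3}p_1)^a$, match the $a$ vertices $p_{-2}$ of $M$ in order to the $a$ vertices $p_{-3}$ of $A^a$ (distance $1$) and every $p_0$ or $p_2$ of $M$ to a $p_1$ of $A^a$ (distance $\le 1$); symmetrically for $B^b$. Hence $\max_P\distDF{M}{P}\le 1$.

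Finally I would check that $M$ is restricted: assign each letter block $\sigma(c_j)$ a home occurrence at a letter gadget vertex of some $\eta(S_i)$ whose embedding uses position $j$, and assign each $p_0$ of $M$ a home at a $p_0$ of a buffer it occupies; for a fixed $\eta(S_i)$, run the traversal above with an embedding compatible with these assignments. A single $p_{-2}$ (resp.\ $p_2$) traversed against a gadget $p_{-2}p_{-3}p_{-2}$ (resp.\ $p_2p_3p_2$) is matched to one of its two $p_{-2}$ (resp.\ $p_2$) vertices, hence to itself; likewise for the $p_0$'s homed in $\eta(S_i)$; and the remaining vertices of $M$ only have to stay within $1$ of $\eta(S_i)$, which the traversal guarantees. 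For $A^a$ and $B^b$ the condition is vacuous, since $\{p_1,p_{-3}\}$ and $\{p_{-1},p_3\}$ are disjoint from $\{p_{-2},p_0,p_2\}$, so no vertex of $M$ is homed there. Thus $M$ is a restricted middle curve and $(G\cup\{A^a,B^b\},1)$ returns \textsc{true}.

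I expect the main difficulty to lie in exactly this coordination for the restricted condition: one single curve $M$ must simultaneously admit, for every input curve, an optimal traversal that matches every home-vertex to its home while keeping everything else within distance $1$. This is what forces the supersequence $S^{**}$ to be chosen with care — every one of its positions should genuinely be used by an embedding of some $S_i$, which one can arrange by growing a shortest common supersequence only through splitting positions shared by several inputs — and it is also what rules out using the full gadgets in $M$. By comparison the distance estimates, although fiddly inside the long buffer gadgets, are routine.
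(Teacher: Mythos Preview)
Your construction is exactly the paper's: the same curve $M=\langle p_0,\sigma(c_1),p_0,\dots,\sigma(c_t),p_0\rangle$, the same case-by-case traversal of $M$ against each $\eta(S_i)$ using an embedding of $S_i$ into the supersequence, and the same matching of $M$ against $A^a$ and $B^b$. Where you differ is only in the level of care you give the restricted condition: the paper dispatches it in a single sentence (``the vertices in $M$ are matched to themselves in $\eta(S_i)$'') without ever assigning homes or discussing which positions of the supersequence are used by which $S_i$, whereas you correctly observe that a home assignment is needed and that this forces every position of $S^{**}$ to be used by some embedding---a genuine subtlety the paper does not address.
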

\begin{proof}
If $(\mathcal{S},t)$ is a \textsc{SCS} instance returning \textsc{true}, then there exists a supersequence of the curves in $\mathcal{S}$ with length at most $t$. Let $S^*$ be this supersequence with letters $s^*_i$, for $i \in \{1,\ldots,t\}$.

We construct a curve $M = \langle m_1, \ldots, m_{2t+1} \rangle$ using vertices of the curves in $G$, such that $M$ represents $S^*$. The vertex $m_j$ for $ j \in \{1, \ldots, 2t+1\}$ is defined as:
\begin{equation*}
	m_j = \begin{cases}
		p_0 & j \text{ is odd,} \\
		p_{-2} & j \text{ is even and } s^*_{j/2} = A,\\
		p_{2} & j \text{ is even and } s^*_{j/2} = B.
	    \end{cases}
\end{equation*}
The vertices with even indices in $M$ represent the characters in $S^*$ while the vertices with odd indices act as a buffer between them.
For every $S_i \in \mathcal{S}$ the curve $\eta(S_i) \in G$ is constructed. We construct a traversal between $\eta(S_i)$ and $M$, that realizes $\distDF{\eta(S_i)}{M}$ that is at most 1. Since $S_i$ is a subsequence of $S^*$, we proceed as follows: we match the first $p_0\in \eta(S_i)$ to the $p_0 \in M$. Then, as long as there are letters in $S^*$, do:
\begin{description}
\item If the current letter in $S_i$ and $S^*$ is the same, then match the next buffer gadget (and the possible rest of the previous buffer gadget) in $\eta(S_i)$ to $p_0\in M$, then match the letter $A$ gadget to $p_{-2}$ or the letter $B$ gadget to $p_2$ respectively. Move to the next letter in both $S_i$ and $S^*$.
\item If the current letter in $S_i$ and $S^*$ differ, or there are no more letters in $S_i$, then we have the following cases depending on the letters in $S^*$:
\begin{itemize}
\item last letter in $S^*$ was $A$ and the current one is $A$: match $p_1\in \eta(S_i)$ to $p_0\in M$, and match $p_{-1}\in \eta(S_i)$ to $p_{-2}\in M$;
\item last letter in $S^*$ was $A$ and the current one is $B$: match $p_1\in \eta(S_i)$ to $p_0\in M$, and match the same $p_{1}\in \eta(S_i)$ to $p_{2}\in M$;
\item last letter in $S^*$ was $B$ and the current one is $A$: match (already seen) $p_1\in \eta(S_i)$ to $p_0\in M$, and match $p_{-1}\in \eta(S_i)$ to $p_{-2}\in M$;
\item last letter in $S^*$ was $B$ and the current one is $B$: match $p_{-1}\in \eta(S_i)$ to $p_0\in M$, and match $p_{1}\in \eta(S_i)$ to $p_{2}\in M$.
\end{itemize}
In each case move to the next letter in $S^*$. Notice that this case can happen at most $t$ times, and that each letter uses at most 2 further vertices in the current buffer gadget, thus the $t$ iterations of the pair $p_1p_{-1}$ in a buffer gadget suffice.
\end{description}
We conclude with matching the rest of the last buffer gadget in $\eta(S_i)$ to $p_0\in M$. 
Notice that the vertices in $M$ are matched to themselves in $\eta(S_i)$, while vertices at $p_0$, $p_2$, or $p_{-2}$ are matched to a vertex at the same position in $M$, thus the conditions for a restricted middle curve are met. The distance between the matched points is at most $1$, thus we have $\distDF{\eta(S_i)}{M} \leq 1$, for all $S_i\in \mathcal{S}$.

Set $a$ and $b$ to the number of occurrences of $A$ and $B$ in $S^*$ respectively, therefore it is $a+b = t$.
Per definition $M$ contains $p_{-2}$ exactly $a$ times, thus we can match these $p_{-2}$ to the vertices $p_{-3}\in A^a$, while the remaining vertices $p_0, p_2\in M$ can be matched to the vertices $p_1\in A^a$, respecting the order of the vertices on $A^a$ and $M$. Analogously the $b$ vertices $p_2\in M$ can be matched to the vertices $p_3\in B^b$, and the vertices $p_0,p_{-2}\in M$ can be matched to $p_{-1}\in B^b$. Therefore it holds that $\distDF{A^a}{M}\leq 1$ and $\distDF{B^b}{M}\leq 1$.
So $M$ is a restricted middle curve of $G \cup \{A^a, B^b\}$ at distance 1, as claimed.
\end{proof}

\begin{lemma}
\lemlab{Lemma:rueck}
If there exists a pair $(a,b) \in I_t$ such that $(G \cup \{A^a, B^b\}, 1)$ is an \textsc{Unordered Middle Curve} instance for the discrete \Frechet distance that returns \textsc{true}, then $(\mathcal{S},t)$ is a \textsc{SCS} instance that returns \textsc{true}.
\end{lemma}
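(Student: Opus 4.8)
The plan is to prove the forward implication directly: from a pair $(a,b)\in I_t$ together with an unordered middle curve $M$ of $G\cup\{A^a,B^b\}$ at distance $1$ under the discrete \Frechet distance, I will read off a common supersequence $S^*$ of $\mathcal{S}$ of length at most $t$.

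First I would establish a rigidity statement about $M$. A priori its vertices lie in $\{p_{-3},\dots,p_3\}$, but the auxiliary curves restrict this severely. Since $A^a=p_1(p_{-3}p_1)^a$ uses only the points $p_1,p_{-3}$ and $B^b=p_{-1}(p_3p_{-1})^b$ uses only $p_{-1},p_3$, a one-line check of distances in $\mathbb{R}$ shows that $p_{-1}$ and $p_3$ are at distance more than $1$ from every vertex of $A^a$, while $p_{-3}$ and $p_1$ are at distance more than $1$ from every vertex of $B^b$; hence $\distDF{M}{A^a}\le1$ and $\distDF{M}{B^b}\le1$ force every vertex of $M$ to lie in $\{p_{-2},p_0,p_2\}$. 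The same elementary check, now applied inside a traversal, shows that in any traversal realising $\distDF{M}{A^a}\le1$ a vertex of $M$ at $p_{-2}$ can only be matched to an occurrence of $p_{-3}$ while a vertex at $p_0$ or $p_2$ can only be matched to an occurrence of $p_1$ --- and symmetrically, in the traversal with $B^b$, a vertex at $p_2$ is matched only to $p_3$ and one at $p_0$ or $p_{-2}$ only to $p_{-1}$.

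Next I would call a maximal run of consecutive vertices of $M$ all equal to $p_{-2}$ an \emph{$A$-block} and one all equal to $p_2$ a \emph{$B$-block}, and define $S^*$ by listing the marker blocks of $M$ from left to right, mapping each $A$-block to the letter $A$ and each $B$-block to $B$. Two claims remain. (i) $|S^*|\le t$: two successive $A$-blocks of $M$ are separated by at least one vertex at $p_0$ or $p_2$; during an $A$-block the pointer into $A^a$ is on an occurrence of $p_{-3}$, during the separating vertex it must be on an occurrence of $p_1$, so by monotonicity distinct $A$-blocks consume distinct occurrences of $p_{-3}$; as $A^a$ has only $a$ of them, $M$ has at most $a$ $A$-blocks, and symmetrically at most $b$ $B$-blocks, so $|S^*|\le a+b=t$. (ii) $S^*$ is a supersequence of every $S_i$: in the traversal realising $\distDF{M}{\eta(S_i)}\le1$, the point $p_{-3}$ (resp.\ $p_3$) of the $j$-th letter gadget of $\eta(S_i)$ must be matched to a vertex of $M$ at $p_{-2}$ (resp.\ $p_2$), hence inside an $A$-block (resp.\ $B$-block); and since $\eta(S_i)$ has a vertex at $p_0$ between consecutive letter gadgets, which by the rigidity statement must be matched to a vertex of $M$ at $p_0$, the blocks hit by successive gadgets are distinct and appear in $M$ in increasing order. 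Reading these blocks off therefore reproduces $S_i$ as a subsequence of $S^*$, so $S^*$ witnesses that $(\mathcal{S},t)$ returns \textsc{true}.

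The hard part is the rigidity step: pinning $M$ down to the three points $p_{-2},p_0,p_2$ and forcing its $\pm2$-vertices onto the "deep" points $p_{\mp3}$ of the auxiliary curves. Once that is in place, (i) and (ii) are routine monotonicity bookkeeping. Minor points I would dispatch along the way are the forced endpoints of each traversal (so $M$ begins and ends at $p_0$) and the degenerate cases $a=0$ or $b=0$; neither causes any difficulty.
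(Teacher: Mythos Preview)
Your proof is correct and follows essentially the same route as the paper: partition the middle curve $M$ according to its traversals with $A^a$ and $B^b$, read off a word $S^*$ of length at most $a+b=t$ from the letter pieces, and then use the traversal of each $\eta(S_i)$ with $M$ (together with the separating vertices between letter gadgets) to certify that $S_i$ is a subsequence of $S^*$.

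The one presentational difference is your preliminary rigidity step, where you first force every vertex of $M$ to lie in $\{p_{-2},p_0,p_2\}$ by intersecting the $1$-neighbourhoods of $A^a$ and $B^b$. The paper does not isolate this fact; it works instead with abstract ``$A$-subsets'' and ``$B$-subsets'' defined purely via the matchings with $A^a$ and $B^b$, and only implicitly uses that such vertices are close to $p_{-3}$ (resp.\ $p_3$). Your rigidity step makes the subsequent bookkeeping more concrete---your $A$-blocks (maximal runs of $p_{-2}$) coincide with the paper's $A$-subsets, and the separation argument via the $p_0$ vertex between letter gadgets is the same as the paper's separation via the $p_1$ vertex in the buffer gadget---so the two arguments are equivalent once unpacked, with yours arguably cleaner to read.
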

\begin{proof}
Given a pair $(a,b) \in I_t$, let $M$ be an unordered middle curve  of the set $(G \cup \{A^a, B^b\}$ at distance 1. We construct a sequence that represents the curve $M$ and prove that every $S_i \in S$ is a subsequence of this sequence.

Since $\distDF{A^a}{M}\leq 1$, we observe a matching between $A^a$ and $M$ that realizes $\distDF{A^a}{M}$. Since $A^a$ consists only of vertices $p_{1}$ and $p_{-3}$, and there cannot exist a point in $\mathbb{R}$ with distance at most 1 to both of these vertices, every vertex in $M$ can only be matched to one vertex in $A^a$. Since for every two vertices $p_{-3}$ in $A^a$ there is a $p_1$ vertex between them in $A^a$, a vertex in $M$ can be matched to at most one $p_{-3}$ in $A^a$. The same holds for the vertices $p_1$. Thus every vertex in $M$ is matched to exactly one vertex in $A^a$. Analogously every vertex in $M$ is matched to exactly one vertex in $B^b$.

Thus we can partition the vertices of $M$ into $2a+1$ subsets $M^a_i$, $i\in \{1,\ldots,2a+1\}$, where all vertices within one subset $M^a_i$ are matched to the $i$-th vertex in $A^a$ (in the matching realizing $\distDF{A^a}{M}$). Analogously we can partition the vertices of $M$ into $2b+1$ subsets $M^b_j$, $j\in\{1,\ldots,2b+1\}$ (using the matching realizing $\distDF{B^b}{M}$). We combine these partitions into one. We call the subsets $M^a_i$ that represent $p_{-3}\in A^a$ the $A$-subsets, and the subsets $M^b_j$ that represent $p_3\in B^b$ the $B$-subsets. %, and the rest of the subsets in both partitions we call buffer subsets.

We note that there cannot exist a vertex in $M$ that is simultaneously in some $A$-subset and some $B$-subset, otherwise it would be at distance at most 1 to both $p_3$ and $p_{-3}$. We take over the $A$- and $B$-subsets into the new partition (and call them letter subsets). By construction there are $a+b=t$ letter subsets. The remaining vertices in $M$ -- either before the first letter subset along $M$, or after the last letter subset, or between two letter subsets form the pairwise disjunct buffer subsets, and thus together with letter subsets define a partition of the vertices of $M$. There can be at most $t+1$ buffer subsets, thus there are at most $2t+1$ subsets in the constructed partition of the vertices of $M$.
\figref{fig:proof} shows an example of such a partition.

\begin{figure}
\hspace{-3.5mm}
\begin{tabular}{cc}
\includegraphics[page=4, width = .48\textwidth]{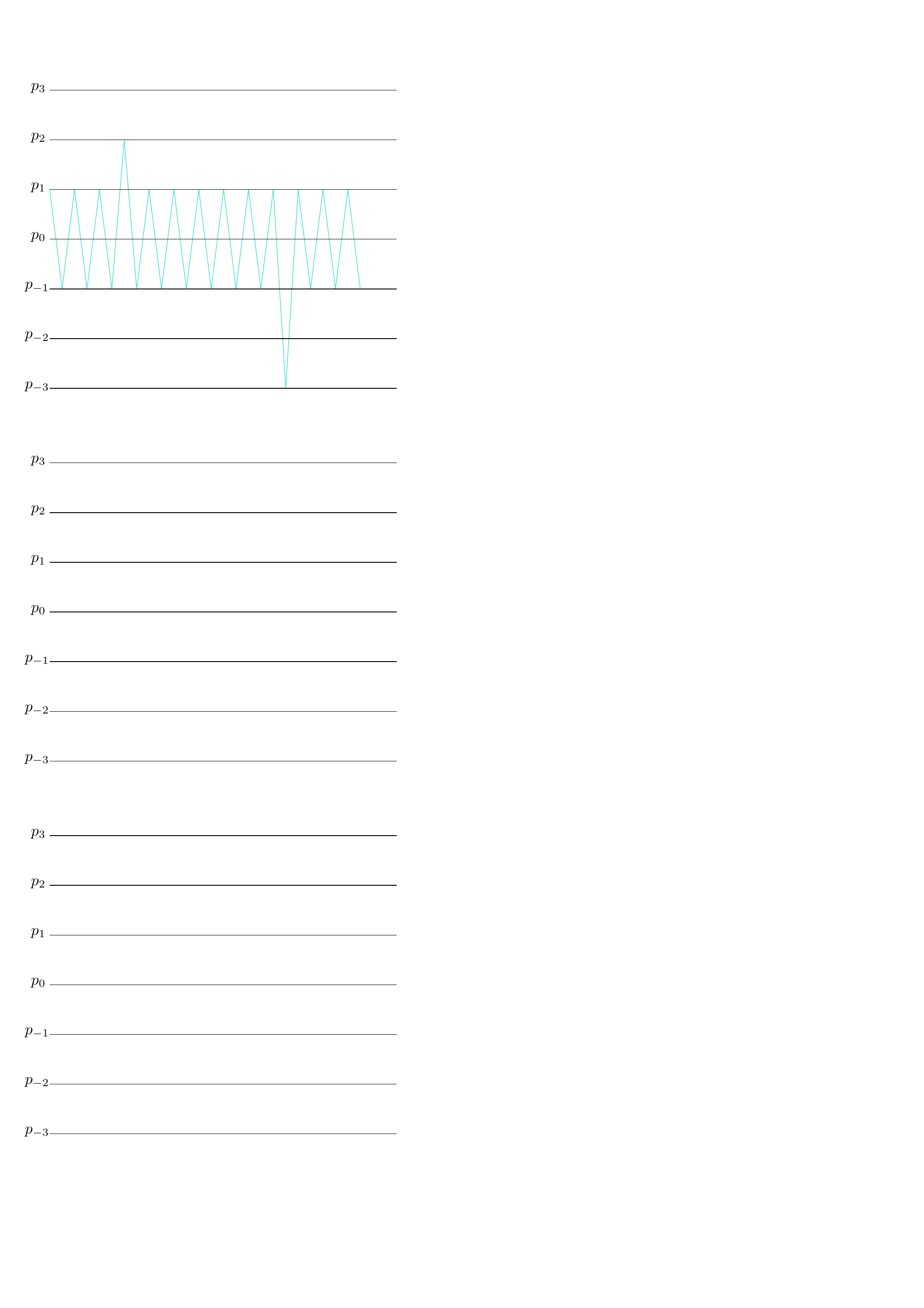}&\includegraphics[page=3, width = .48\textwidth]{proof2}\\
\textbf{(a)} Individual partitions of $M$ %based on the matching between $M$ and $A^1$ or $B^2$
&\textbf{(b)} Combined partition of $M$
\end{tabular}
\caption{An example of possible matchings between $M$ and $A^1$ or $B^2$, given in % and the resulting partition of the example from 
\figref{fig:reduction}. \\\textbf{(a)} The individual matchings and partition of $M$ based on $A^1$ (light blue lines and light blue boxes) and on $B^2$ (light green lines and light green boxes) respectively. \\\textbf{(b)} The combined partition of $M$ is represented by the circles around vertices (light blue -- A-parts, light green -- B-parts, gray -- buffer parts.}
\figlab{fig:proof}
\end{figure}

The sequence $S^*$ can be constructed using the constructed partition of $M$, by replacing the $A$-subsets with the letter $A$, and the $B$-subsets with the letter $B$. The buffer subsets are simply omitted. The sequence $S^*$ has length $t$. We need to prove that $S^*$ is a supersequence of all sequences in $\mathcal{S}$.

Let for some $S_i\in\mathcal{S}$ be $\eta(S_i) \in G$ its representing curve. As $M$ is a middle curve of $G \cup \{A^i, B^j\}$ at distance 1, there exists a matching of $\eta(S_i)$ and $M$ that realizes $\distDF{\eta(S_i)}{M}\leq 1$. In this matching a vertex in one $A$-subset (of the partition of the vertices of $M$) cannot be matched to two vertices in different letter gadgets (in $\eta(S_i)$), since the buffer gadget separating two letter gadgets contains the vertex $p_{1}$, which cannot be matched to a vertex in a $A$-subset with distance at most $1$. Analogously, a vertex in one $B$-subset cannot be matched to vertices in two different letter gadgets. 

Each letter $A$ gadget in $\eta(S_i)$ contains vertex $p_{-3}$ which has to be matched to a vertex in an $A$-subset (otherwise by construction it would be at distance at most 1 to $p_{1}$). Analogously, each letter $B$ gadget in $\eta(S_i)$ contains vertex $p_3$ which has to be matched to a vertex in a $B$-subset. Thus each letter gadget in $\eta(S_i)$ corresponds one-to-one to a letter subset in $M$, and the sequence of letter gadgets in $\eta(S_i)$ corresponds to the sequence of letter subsets in $M$. Therefore $S_i$ is a subsequence of $S^*$, as claimed.
\end{proof}

\lemref{Lemma:hin} and \lemref{Lemma:rueck} imply the following theorem for the discrete \Frechet distance.

\begin{theorem}
\thmlab{theorem:disc}
Every variant of \textsc{Middle Curve} problem for the discrete and the continuous \Frechet distance is NP-hard.
\end{theorem}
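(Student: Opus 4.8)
The plan is to promote the two lemmas into a full reduction from \textsc{SCS}, first for the discrete \Frechet distance and then to transfer it to the continuous distance.

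\textbf{Discrete case.} I would chain \lemref{Lemma:hin} and \lemref{Lemma:rueck} through the hierarchy noted in the preliminaries: a restricted middle curve is ordered, and an ordered middle curve is a middle curve. \lemref{Lemma:hin} turns a \textsc{true} instance $(\mathcal{S},t)$ of \textsc{SCS} into a pair $(a,b)\in I_t$ for which $(G\cup\{A^a,B^b\},1)$ is already \textsc{true} for the \emph{restricted} variant, hence for all three variants; \lemref{Lemma:rueck} gives the converse starting from the weakest (\emph{unordered}) variant. So for each variant $V$ we obtain: $(\mathcal{S},t)$ is \textsc{true} if and only if some $(a,b)\in I_t$ makes $(G\cup\{A^a,B^b\},1)$ \textsc{true} for variant $V$. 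It then remains to note that producing the family of instances $(G\cup\{A^a,B^b\},1)$ over all $(a,b)\in I_t$ takes polynomial time: each $\eta(S_i)$ has complexity $O(t|S_i|)$, each of $A^a$ and $B^b$ complexity $O(t)$, and $|I_t|=t+1$, where we may assume $t\le\sum_i|S_i|$ since otherwise $(\mathcal{S},t)$ is trivially \textsc{true} (witnessed by the concatenation $S_1\cdots S_n$). This is a polynomial-time disjunctive reduction from the NP-hard problem \textsc{SCS}~\cite{SCS}, so every variant of \textsc{Middle Curve} is NP-hard under $d_{DF}$.

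\textbf{Continuous case.} I would establish the same equivalence with $d_{DF}$ replaced by $d_F$. One direction is immediate: the curve $M$ built in the proof of \lemref{Lemma:hin} has $d_{DF}(M,Q)\le 1$ for every $Q\in G\cup\{A^a,B^b\}$, and since $d_F\le d_{DF}$ and the traversals used there extend to continuous matchings respecting the vertex-to-vertex correspondence, $M$ is still a restricted (hence ordered, hence unordered) middle curve at distance $1$ under $d_F$. For the other direction I would rerun the proof of \lemref{Lemma:rueck}, checking that every step survives when matchings may be continuous. Every distance argument used there has the form ``no point of $\mathbb{R}$ lies within distance $1$ of two fixed points more than $2$ apart'' (for example $p_{-3}$ and $p_1$, $p_{-3}$ and $p_3$, or a letter-gadget vertex and a neighbouring buffer vertex), and such a statement constrains an arbitrary point of a continuous matching exactly as it constrained a matched vertex before. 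Hence every vertex of $A^a$ (resp.\ $B^b$) is still matched to exactly one vertex of $M$, the combined partition of the vertices of $M$ into letter and buffer subsets is still forced, no vertex lies in both an $A$-subset and a $B$-subset, and each letter gadget of $\eta(S_i)$ still corresponds one-to-one and in order to a letter subset of $M$, so the same supersequence $S^*$ can be read off.

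\textbf{Main obstacle.} I expect the delicate point to be precisely this continuous converse: in the continuous \Frechet distance a single vertex of $M$ may be matched to a whole subcurve of $\eta(S_i)$, $A^a$, or $B^b$ (and vice versa), so one must verify that no such subcurve can stretch from inside one letter gadget into the next. This is exactly what the buffer vertices $p_{\pm 1}$ prevent, and it is the one place where the argument is not a verbatim copy of the discrete one; everything else is bookkeeping together with the polynomial-size observation on $I_t$.
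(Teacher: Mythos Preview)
Your proposal is correct and follows essentially the same route as the paper: chain \lemref{Lemma:hin} and \lemref{Lemma:rueck} through the restricted/ordered/unordered hierarchy, observe that the $t+1$ instances over $I_t$ are polynomial-size, and then lift to the continuous case by invoking $d_F\le d_{DF}$ for the forward direction and rerunning the partition argument for the converse. The paper packages the continuous analogues as separate lemmas (\lemref{Lemma:cont:hin}, \lemref{Lemma:cont:rueck}) and in the latter uses the simultaneous constraints from $A^a$ and $B^b$ to pin the matched point to exactly $p_{\pm 2}$, whereas you rely directly on the ``no point is within $1$ of two points more than $2$ apart'' separation; both arguments work, and your explicit bound $t\le\sum_i|S_i|$ is a detail the paper leaves implicit.
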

\begin{proof}[Proof of \thmref{theorem:disc} for the discrete \Frechet distance]
We reduce from the \textsc{SCS} problem, which is known to be NP-hard. \lemref{Lemma:hin} and \lemref{Lemma:rueck} show that this construction is a viable reduction.

Given the \textsc{SCS} instance $(\mathcal{S},t)$, the \textsc{Middle Curve} instance $(G \cup \{A^a, B^b\}, 1)$ for a pair $(a,b) \in I_t$ can be constructed in a time linear in the input size. As the number of possible pairs $(a,b) \in I_t$ for a given supersequence of length $t$ is linear in $t$, the number of different \textsc{Middle Curve} instances is also linear in $t$.
Thus the reduction can be computed in a time polynomial in the input size of the \textsc{SCS} instance.
\end{proof}

Like the proof of Buchin \etal \cite{buchinsoda19}, the shown reduction for the discrete \Frechet distance can be adopted to the continuous \Frechet distance to prove \thmref{theorem:disc} in that case too. \lemref{Lemma:cont:hin} and \lemref{Lemma:cont:rueck} take place of \lemref{Lemma:hin} and \lemref{Lemma:rueck} respectively. The rest of the proof is taken verbatim.

\begin{lemma}
\lemlab{Lemma:cont:hin}
If $(\mathcal{S},t)$ is an instance of the \textsc{SCS} that returns \textsc{true}, then there exists a pair $(a,b) \in I_t$ such that $(G \cup \{A^a, B^b\}, 1)$ is a \textsc{Restricted Middle Curve} instance for the continuous \Frechet distance that returns \textsc{true}.
\end{lemma}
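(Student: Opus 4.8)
The plan is to reuse \emph{verbatim} the curve $M$ and the pair $(a,b)\in I_t$ constructed in the proof of \lemref{Lemma:hin}, and to transfer the distance bounds from the discrete to the continuous setting using the fact recorded in the preliminaries that $\distFr{P'}{P''}\le\distDF{P'}{P''}$ for any two curves. Concretely: given the supersequence $S^*$ of length at most $t$, I would form $M=\langle m_1,\ldots,m_{2t+1}\rangle$ exactly as before and set $a,b$ to the numbers of $A$'s and $B$'s in $S^*$, so that $(a,b)\in I_t$. By \lemref{Lemma:hin} we already have $\distDF{\eta(S_i)}{M}\le 1$ for every $S_i\in\mathcal{S}$, as well as $\distDF{A^a}{M}\le 1$ and $\distDF{B^b}{M}\le 1$; hence the corresponding continuous \Frechet distances are also at most $1$, so $M$ lies at continuous \Frechet distance $\le 1$ from every curve in $G\cup\{A^a,B^b\}$ and is therefore a middle curve of this set at distance $1$.

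It then remains to verify that $M$ is in fact a \emph{restricted} middle curve for $d_F$, i.e.\ that for each $P\in G\cup\{A^a,B^b\}$ there is a continuous matching witnessing $\distFr{P}{M}\le 1$ in which every vertex of $M$ is mapped to its own occurrence in $P$. Here I would take the discrete traversals already built in the proof of \lemref{Lemma:hin}, which have precisely this property, and extend each of them to a continuous matching in the standard way (as indicated in the preliminaries, a traversal realizing the discrete \Frechet distance extends to a reparametrization of the two curves). Whenever such a traversal holds one curve at a vertex while advancing the other along a segment, the Euclidean distance stays bounded by the maximum of the distances at that segment's endpoints (convexity of the norm), so the extended matching still has cost $\le 1$, and the vertices of $M$ remain matched to their copies in $P$. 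This establishes the restricted condition, and since every restricted middle curve is also ordered and unordered, $(G\cup\{A^a,B^b\},1)$ returns \textsc{true} for all three variants under $d_F$.

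The argument is essentially immediate once \lemref{Lemma:hin} is in hand; no new gadgets or case distinctions beyond those in \lemref{Lemma:hin} are needed. The only point requiring a little care — and the closest thing to an obstacle — is confirming that the continuous extension of the discrete traversals both preserves the ``matched to itself'' requirement and does not push the matching cost above $1$, which is exactly what the convexity-of-the-norm observation above guarantees.
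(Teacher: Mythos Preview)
Your proposal is correct and matches the paper's approach: reuse the curve $M$ and the pair $(a,b)$ from \lemref{Lemma:hin} and invoke the inequality $\distFr{\cdot}{\cdot}\le\distDF{\cdot}{\cdot}$ from the preliminaries. The paper's proof is in fact terser than yours---it simply asserts that $M$ remains a restricted middle curve under $d_F$ once the distance bound is transferred---so your explicit check that the discrete traversals extend to continuous matchings of cost at most $1$ (via convexity of the norm) while preserving the ``matched to itself'' property is just a fuller justification of the same step.
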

\begin{proof}
Given the \textsc{SCS} instance $(\mathcal{S},t)$ returning \textsc{true}, \lemref{Lemma:hin} implies that there exists $(a,b) \in I_t$, such that $\distDF{g}{M} \leq 1$ for $g \in G \cup \{A^a, B^b\}$, and the restricted middle curve $M$ constructed in its proof. Since the discrete \Frechet distance is an upper bound for the continuous \Frechet distance, we have $\distFr{g}{M} \leq \distDF{g}{M} \leq 1$ for all $g \in G \cup \{A^a, B^b\}$. This means that $M$ is also a restricted middle curve for the continuous \Frechet distance.
\end{proof}

\begin{lemma}
\lemlab{Lemma:cont:rueck}
If there exists a pair $(a,b) \in I_t$ such that $(G \cup \{A^a, B^b\}, 1)$ is an \textsc{Unordered Middle Curve} instance for the continuous \Frechet distance that returns \textsc{true}, then $(\mathcal{S},t)$ is a \textsc{SCS} instance that returns \textsc{true}.
\end{lemma}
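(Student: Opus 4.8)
The plan is to follow the structure of the proof of \lemref{Lemma:rueck}, replacing every appeal to the combinatorics of a discrete traversal by an argument about monotone reparametrizations realizing the continuous \Frechet distance. Concretely, let $M$ be an unordered middle curve of $G\cup\{A^a,B^b\}$ for the continuous \Frechet distance, so $\distFr{g}{M}\le 1$ for every $g\in G\cup\{A^a,B^b\}$; fix monotone parametrizations realizing $\distFr{A^a}{M}\le 1$ and, separately, $\distFr{B^b}{M}\le 1$. The geometric facts used in the discrete proof are statements about points of $\mathbb R$ and remain valid: no point of $\mathbb R$ is within distance $1$ of both $p_{-3}$ and $p_{1}$, of both $p_{3}$ and $p_{-1}$, or of both $p_{-3}$ and $p_{3}$. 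Hence at any parameter where $A^a$ is at one of its $p_{-3}$-vertices, $M$ has value at most $-2$, and where $A^a$ is at a $p_{1}$-vertex, $M$ has value at least $0$; symmetrically for $B^b$, with $p_3,p_{-1}$ and the signs reversed.

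Using monotonicity of the parametrization of $A^a$ together with the fact that the vertices of $A^a$ read $p_1,p_{-3},p_1,\dots,p_{-3},p_1$ (with $a$ copies of $p_{-3}$), I would carve $M$ into $a$ pairwise disjoint, consecutively ordered subcurves $W_1,\dots,W_a$ ($A$-windows): $W_i$ is the part of $M$ traversed while $A^a$ runs from its $i$-th to its $(i{+}1)$-th occurrence of $p_1$, so each $W_i$ contains a point of value $\le -2$ (the image of the $i$-th $p_{-3}$), consecutive windows meet at a point of value $\ge 0$, and the two outer ends of $W_1$ and $W_a$ (the endpoints of $M$) have value $\ge 0$. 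Symmetrically, the parametrization of $B^b$ yields $b$ windows $V_1,\dots,V_b$ ($B$-windows), each containing a point of value $\ge 2$ and separated by points of value $\le 0$. Choosing in each $W_i$ a point $x_i$ of value $\le -2$ and in each $V_j$ a point $y_j$ of value $\ge 2$, these $a+b=t$ points are pairwise distinct (values $\le -2$ versus $\ge 2$) and hence linearly ordered along $M$; I define $S^\ast\in\{A,B\}^t$ by listing $A$ for each $x_i$ and $B$ for each $y_j$ in the order in which they occur on $M$.

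It then remains to show that every $S_\ell\in\mathcal S$ is a subsequence of $S^\ast$. Fixing a monotone parametrization realizing $\distFr{\eta(S_\ell)}{M}\le 1$: each letter-$A$ gadget $p_{-2}p_{-3}p_{-2}$ (resp.\ letter-$B$ gadget $p_2p_3p_2$) of $\eta(S_\ell)$ forces $M$ to attain value $\le -2$ (resp.\ $\ge 2$) on the subcurve matched to that gadget, while every buffer gadget, containing a $p_1$ and a $p_{-1}$, forces $M$ to attain value $\ge 0$ and value $\le 0$ on its matched subcurve. Because the parametrization is monotone, these forced dips and peaks of $M$ appear in the order of the gadgets, separated by values crossing $0$; comparing with the $A$- and $B$-window structure I would argue that consecutive letter gadgets of the same type land in strictly increasing windows of that type, and that a letter-$A$ gadget preceding a letter-$B$ gadget has its dip before the $B$-gadget's peak along $M$ (and vice versa). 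Reading off the gadget types then exhibits the gadget sequence of $\eta(S_\ell)$, which is $S_\ell$ itself, as a subsequence of $S^\ast$; so $S^\ast$ is a common supersequence of all $S_\ell$ of length $t$, and $(\mathcal S,t)$ returns \textsc{true}.

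I expect the delicate point to be exactly the matching of dips and peaks to windows in this last step. Unlike a discrete traversal, a continuous parametrization may pause, and in particular may linger at an interior point of an edge of $A^a$ (for instance near value $-1$), so a priori $M$ could dip more than once "inside" a single $A$-window. Ruling this out -- i.e.\ establishing that the $A$-windows and $B$-windows partition the dips and peaks of $M$ in one-to-one correspondence with the letter gadgets of each $\eta(S_\ell)$ -- is the heart of the argument, and is where one must exploit that the value of $A^a$ (resp.\ $B^b$) is unimodal on each window and that $M$ must return above $0$ (resp.\ below $0$) between two consecutive dips (resp.\ peaks).
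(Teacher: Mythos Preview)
Your approach is viable and can be completed along the lines you sketch, but it takes a genuinely different route from the paper. The paper's key simplification is to exploit the $A^a$ and $B^b$ constraints \emph{simultaneously} rather than separately: any point $q_a$ on $M$ matched to some $p_{-3}\in A^a$ lies in $[-4,-2]$, but since $\distFr{B^b}{M}\le 1$ that same point $q_a$ must also be within $1$ of some point on $B^b\subseteq[-1,3]$, forcing $q_a=-2$ exactly. Thus the paper's $A$-subsets sit precisely at the value $p_{-2}$ (and symmetrically $B$-subsets at $p_2$), so the combined partition of $M$ and the sequence $S^\ast$ are pinned down without any choice of representatives, and the remainder of the discrete argument carries over almost verbatim.

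Your windows $W_i,V_j$, defined from one matching at a time, are coarser objects, and your $S^\ast$ depends a~priori on the chosen representatives $x_i,y_j$; this is exactly why your last step becomes delicate. It can be completed: the unimodality of $A^a$ on each $W_i$ indeed forbids the pattern $M\le-2$, $M\ge 0$, $M\le-2$ inside a single $W_i$, so distinct $A$-gadget dips land in distinct $A$-windows (and symmetrically for $B$). For the mixed case one shows that every point of $W_{i}$ with $M\le-2$ precedes every point of $V_{j}$ with $M\ge 2$ whenever the corresponding gadgets appear in that order in $\eta(S_\ell)$, again by a unimodality/interval argument applied now to both partitions; this also makes the interleaving of the $x_i,y_j$ independent of the choice. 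Note, however, that you do not need a one-to-one correspondence between gadgets and windows as you write in your last paragraph---only an order-preserving injection from the gadget sequence of $\eta(S_\ell)$ into the $x/y$ sequence. The paper's trick of pinning letter subsets to the exact values $\pm 2$ buys you all of this for free and avoids the entire window-ordering analysis.
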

\begin{proof}
Given a pair $(a,b)\in I_t$, let $M$ be an unordered middle curve of the set $G\cup \{A^a, B^b\}$ at distance 1. We adapt the proof of \lemref{Lemma:rueck} to the continuous case.

Since $\distFr{A^a}{M}\leq 1$, there has to be a point $q_a$ on the curve $M$ that is at distance at most 1 to the vertex $p_{-3}\in A^a$, for each such a vertex. Thus $q_a\in [-2,-3]$. But since $\distFr{B^b}{M}\leq 1$, there has to be a point on $B^b$ at distance at most 1 to $q_a$, thus such a point is in $[-2,-1]$. Since all points on $B^b$ lie in $[-1,3]$, it implies that that point has to be exactly at $-1$, thus $q_a=p_{-2}$. We call that point an $A$-subset of $M$. It is possible that the curve $M$ contains several consecutive vertices at $p_{-2}$, and in that case the whole subcurve defined by such vertices is an $A$-subset of $M$. Analogously, we conclude that for each $p_3\in B^b$ there is a point $p_2\in M$, and call it a $B$-subset of $M$.

As in \lemref{Lemma:rueck} we partition the curve $M$ into $2a+1$ (respectively $2b+1$) subcurves $M^a_i$, $i\in \{1,\ldots, 2a+1\}$ (resp. $M^b_j$, $j\in \{1,\ldots, 2b+1\}$), where the subcurves with even indices are $A$-subsets (resp. $B$-subsets) of $M$, and the rest of the curve $M$ defines the subcurves with odd indices. Again, we combine these two partitions of $M$ into one, since no vertex on $M$ can be in both $A$- and $B$-subsets. The sequence $S^*$ is constructed by replacing each letter subset in $M$ with the corresponding letter.

The rest of the proof of \lemref{Lemma:rueck} follows, since for each $S_i\in \mathcal{S}$ and for the matching that realizes $\distFr{\eta(S_i)}{M}\leq 1$ it holds that a vertex in one $A$-subset in $M$ cannot be matched to the vertex $p_{-3}$ in two different letter $B$ gadgets in $\eta(S_i)$, and each vertex $p_{-3}\in \eta(S_i)$ has to be matched to a vertex in an $A$-subset. The analogous claim can be made for $B$-subsets. There is a one-to-one correspondence between the letter gadgets in $\eta(S_i)$ and the letter subsets in $M$, thus $S_i$ is a subsequence of $S^*$.
%
%Given a true \textsc{Middle Curve} instance $(G \cup \{A^i, B^j\}, 1)$ for a given $(i,j) \in I_t$, let $M$ be it's middle curve. We show that the matchings used to compute the discrete \Frechet distance can also be used to compute the continuous \Frechet distance. For this we show that we can differentiate between A- and B-Parts in a matching belonging to the continuous \Frechet distance.
%
%For each $p_{-3}$ vertex in $A^i$, there is some vertex $g$ in $M$ matched to it for the discrete \Frechet distance. As $g$ has distance at most $1$ to $p_{-3}$ and some point in $B^j$, $g$ must be at $p_{-2}$ and the matched vertex in $B^j$ must be at $p_{-1}$. Since the line segments incident to this $p_{-1}$ in $B^j$ have distance to $g$ greater than $1$, the distance between the neighborhood around $g$ in $M$ and $p_{-1}$ is not greater than the distance between $g$ and $p_{-1}$. This means that for the continuous \Frechet distance $g$ is either $p_{-2}$ or lies on a line segment connecting two vertices at $p_{-2}$. So there is an A-part on $M$ for every $p_{-3}$ vertex on $A^i$, containing a vertex at $p_{-3}$. Analogously, there is a B-part on $M$ for every $p_{3}$ vertex on $B^j$, containing a vertex at $p_{3}$.
%
%Now we can create a sequence $s^*$ of length $t$, analogously to \lemref{Lemma:rueck}, by replacing the A-parts with the character $A$ and the B-parts with the character $B$. The letter gadgets of a curve $\gamma(s_i)$ for a sequence $s \in S$ must be matched to some vertex at $p_{-2}$ or $p_{2}$, $s_i$ is a subsequence of $S$.
\end{proof}

\bigskip

Using \thmref{theorem:disc}, we can now prove the NP-completeness of the \textsc{Middle Curve} decision problem. Given a \textsc{Middle Curve} instance $(\mathcal{P}, \delta)$ with $\mathcal{P}$ containing $n$ curves of complexity $m$, we guess non-deterministically a middle curve $M$ of complexity $\ell$. We can decide whether the \Frechet distance between $M$ and a curve $P \in \mathcal{P}$ is at most $\delta$ in $\O{ m\ell}$ time using the algorithm by Alt and Godau \cite{altgodau} for the continuous, and by Eiter and Mannila \cite{eiter1994computing} for the discrete \Frechet distance. We note that the algorithm by Alt and Godau \cite{altgodau} has to be modified a bit, as it uses a random access machine instead of a Turing machine, as this allows the computation of square roots in constant time. But comparing the distances is possible by comparing the squares of the square roots, thus this results in a non-deterministic $\O{n m\ell }$-time algorithm for the \textsc{Unordered Middle Curve} problem.

In order to decide the  \textsc{Ordered Middle Curve} problem, it is necessary to compare the middle curve to the input curves, which is possible in $O(nm)$ time. For the restricted \textsc{Restricted Middle Curve} problem the matching corresponding to the Frechet distance $\leq \delta$ has to be known. This matching is a result of the decision algorithm by Alt and Godau \cite{altgodau}. Given this matching it can be checked in $\O{m+\ell}$ time if a vertex is matched to itself. This yields the following theorem.

\begin{theorem}
\thmlab{theorem:complete}
Every variant of the \textsc{Middle Curve} problem for the discrete or continuous \Frechet distance is NP-complete.
\end{theorem}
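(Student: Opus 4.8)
The plan is to combine the hardness already in hand with a standard guess-and-verify argument for membership in \NP. NP-hardness of all three variants, for both the discrete and the continuous \Frechet distance, is precisely \thmref{theorem:disc}, so the only work left is to show that each of \textsc{Unordered Middle Curve}, \textsc{Ordered Middle Curve} and \textsc{Restricted Middle Curve} lies in \NP; this is the content of the discussion preceding the statement, and I would organise it as follows.

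First I would pin down a polynomial bound on the complexity $\ell$ of the curve to be guessed, since a middle curve is a priori unbounded in length. The claim is that if $(\mathcal{P},\delta)$ admits a middle curve of a given type, then it admits one with $\ell=\O{nm}$. To see this, take such a curve $M=\langle m_1,\dots,m_\ell\rangle$ of minimum complexity, fix for every $P_j\in\mathcal{P}$ a traversal (respectively a pair of reparametrisations) realising distance at most $\delta$, and let $a_j(i)$ be the last vertex of $P_j$ matched to $m_i$. Monotonicity makes $i\mapsto(a_1(i),\dots,a_n(i))$ coordinatewise nondecreasing with total increase at most $n(m-1)$, so once $\ell-1>n(m-1)$ there are consecutive indices $i,i+1$ with the same vector; then $m_{i+1}$ is matched, in every $P_j$, only to vertices already matched to $m_i$, and deleting $m_{i+1}$ and contracting each traversal accordingly produces a strictly shorter curve at distance at most $\delta$ from every $P_j$ without changing any of the surviving matchings, so the ordered, respectively restricted, condition is preserved. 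This contradicts minimality, hence $\ell\le n(m-1)+1$ and $M$ has a description of polynomial size.

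With the bound in place, the nondeterministic verifier guesses $M$ of complexity $\O{nm}$ with vertices taken from $\bigcup_{P_j\in\mathcal{P}}\bigcup_{p\in P_j}\{p\}$, and for each $P_j$ decides $\distDF{M}{P_j}\le\delta$ by the algorithm of Eiter and Mannila \cite{eiter1994computing} (discrete case) or $\distFr{M}{P_j}\le\delta$ by the algorithm of Alt and Godau \cite{altgodau} (continuous case) in $\O{m\ell}$ time each; the Alt--Godau procedure only has to be transcribed to a Turing machine by comparing squared distances rather than taking square roots, which does not affect the running time. That settles \textsc{Unordered Middle Curve} in nondeterministic $\O{nm\ell}$ time. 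For \textsc{Ordered Middle Curve} the verifier additionally reads off a witnessing matching and checks in $\O{nm}$ time that the vertices of $M$ respect the order imposed by the curves of $\mathcal{P}$; for \textsc{Restricted Middle Curve} it checks from the same matchings, in $\O{m+\ell}$ time per curve, that every vertex of $M$ is matched to itself. Since everything is polynomial in $n$ and $m$, each variant is in \NP, and with \thmref{theorem:disc} this gives NP-completeness.

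I expect the genuine obstacle to be the very first step: arguing that an optimal middle curve has polynomial complexity, and in particular verifying that the length-reducing surgery on $M$ leaves the order condition (ordered variant) and the self-matching condition (restricted variant) intact, and that the same argument still goes through for the continuous \Frechet distance, where "the last vertex of $P_j$ matched to $m_i$" has to be replaced by a point on $P_j$. Everything after that — guessing $M$ and running the two classical \Frechet decision procedures — is routine, the only minor care being the passage from the real-RAM model to a Turing machine.
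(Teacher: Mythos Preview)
Your approach matches the paper's: NP-hardness is read off from \thmref{theorem:disc}, and NP-membership is by nondeterministically guessing $M$ and verifying with the Eiter--Mannila and Alt--Godau decision procedures, plus the extra order/self-matching checks for the ordered and restricted variants, with the same remark about replacing square roots by squared comparisons when moving off the real RAM.

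The one substantive difference is that you supply an explicit polynomial bound $\ell\le n(m-1)+1$ on the certificate size via the pigeonhole-and-deletion argument on the progress vectors $(a_1(i),\dots,a_n(i))$. The paper simply writes ``guess non-deterministically a middle curve $M$ of complexity $\ell$'' and states the verification time as $\O{nm\ell}$ without ever arguing that $\ell$ may be taken polynomial in $n$ and $m$; your proposal closes this gap. Your caveat about the continuous case is well placed---the deletion argument as written is for discrete traversals, and the paper does not address this point either---but the discrete bound already suffices for the continuous variant, since the vertex set of $M$ is finite and the continuous \Frechet distance between polygonal curves is realised by a matching that is piecewise linear between vertices, so the same progress-vector argument on vertex images goes through.
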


%Note that this does not hold for the optimization variant of the \textsc{Middle Curve} problem, because for this we would need to compute square roots. 
%Maike suggested to drop out the previous sentence.

If the \textsc{SCS} problem is parameterized by the number of input sequences $n$, it is known to be W[1]-hard \cite{bulteau14}.
In our reduction from \textsc{SCS} the number of input curves in the constructed \textsc{Middle Curve} instance 
%$(G \cup \{A^i, B^j\}, 1)$ 
is $n+2$.
Thus the shown reduction is also a parameterized reduction from \textsc{SCS} with the parameter $n$ to the \textsc{Middle Curve} problem parameterized by the number of input curves, yielding  the following theorem.

\begin{theorem}
Every variant of the \textsc{Middle Curve} problem for the discrete and continuous \Frechet distance parameterized by the number of input curves $n$ is W[1]-hard.
\end{theorem}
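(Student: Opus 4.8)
The plan is to observe that the reduction from \textsc{SCS} already used to prove \thmref{theorem:disc} is, once the right parameters are fixed, a parameterized reduction, and then to invoke the known W[1]-hardness of \textsc{SCS} under the parameter ``number of input sequences''~\cite{bulteau14}. Concretely, take as parameter of \textsc{SCS} the number $n = |\mathcal{S}|$, and as parameter of \textsc{Middle Curve} the number of input curves.

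First I would recall the shape of the reduction: an instance $(\mathcal{S},t)$ with $|\mathcal{S}| = n$ is mapped to the family of \textsc{Middle Curve} instances $(G \cup \{A^a, B^b\}, 1)$, one per pair $(a,b) \in I_t$, and each such instance consists of exactly $n+2$ curves, namely the $n$ curves of $G$ together with $A^a$ and $B^b$. Three checks then remain. (1) Target parameter: it is $n+2$, hence bounded by a computable function of the source parameter $n$, which is all an fpt-reduction requires. (2) Running time: as already noted in the proof of \thmref{theorem:disc}, the whole family is produced in time polynomial in $|(\mathcal{S},t)|$, in particular in fpt time; here one uses that $|I_t| = t+1$ is polynomial in the input size. (3) Correctness: \lemref{Lemma:hin} together with \lemref{Lemma:rueck} (discrete case), and \lemref{Lemma:cont:hin} together with \lemref{Lemma:cont:rueck} (continuous case), show that $(\mathcal{S},t)$ is a yes-instance if and only if some member of the family is a yes-instance of the corresponding variant of \textsc{Middle Curve}. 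Hence every variant of \textsc{Middle Curve} parameterized by the number of curves is W[1]-hard for both distances.

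The only point requiring care is that, because of the quantifier over $(a,b) \in I_t$, the reduction is not literally many-one but a disjunctive truth-table reduction with a polynomial number of queries, all of parameter value $n+2$. I would argue that this already yields the claim: if any variant of \textsc{Middle Curve} were fixed-parameter tractable in the number of curves, running that algorithm on each of the $t+1$ produced instances would decide \textsc{SCS} in fpt time in $n$, contradicting the W[1]-hardness of \textsc{SCS}; equivalently, FPT is closed under such reductions. If a strictly many-one fpt-reduction is desired, one would have to collapse the $t+1$ instances into a single one by forcing the middle curve to commit to a split $a+b=t$ while adding only $O(1)$ further curves, and turning ``a shared budget of $a$ letter-$A$ gadgets and $b$ letter-$B$ gadgets'' into a bounded set of one-dimensional curves is the step I expect to be the real obstacle; everything else is routine bookkeeping on top of the reduction already in hand.
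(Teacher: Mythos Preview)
Your proposal is correct and follows exactly the paper's approach: the existing reduction sends an \textsc{SCS} instance on $n$ sequences to \textsc{Middle Curve} instances on $n+2$ curves, so invoking the W[1]-hardness of \textsc{SCS} parameterized by $n$~\cite{bulteau14} finishes the argument. You are in fact more careful than the paper on one point it glosses over: the quantifier over $(a,b)\in I_t$ means the reduction is disjunctive rather than many-one, and your observation that running a hypothetical FPT algorithm on all $t+1$ instances (each with parameter $n+2$) would still solve \textsc{SCS} in fpt time is the right way to close that gap.
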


\section{Approximation algorithm}

A different way of parameterizing the \textsc{Middle Curve} problem is to use the complexity of the middle curve.
Given a set of polygonal curves $\mathcal{P}$, a $\delta \geq 0$, and a parameter $\ell \in \mathbb{N}$. We define the \textsc{parameterized middle curve} decision problems, that return \textsc{true} iff a middle curve of complexity $\leq \ell$ with corresponding conditions exists (for each of the three variants).

It is clear that there exists a simple brute force optimization algorithm for the \textsc{Parameterized Middle Curve} instance $(\mathcal{P},\delta, \ell)$, that tests all $\ell$-tuples of the vertices from the curves in $\mathcal{P}$ in $\O{(m n)^\ell m \ell \log m \ell}$. This holds for all three versions of the problem.

We want to give an approximation algorithm for \textsc{Parameterized Middle Curve} optimization problem for the discrete Frechet distance. For this we use an approximation of the $(k,\ell)$-center optimization problem on curves. The $(k,\ell)$-center problem for curves was introduced by Driemel \etal \cite{dks16}. Given a set $\mathcal{P} = \{P_1, \dots, P_n\}$ of polygonal curves of complexity at most $m$, it looks for a set of curves $\mathcal{C} = \{ C_1, \dots C_k \}$, each of complexity at most $\ell$, that minimizes $\max_{P \in \mathcal{P}} \min_{i=1}^k \gamma(C_i, P)$ for a distance measure $\gamma$. The unordered \textsc{Parameterized Middle Curve} optimization problem is a $(1,\ell)$-center problem, where the curve $C_1$ is limited to vertices from the input curves and the distance measure $\gamma$ is a variant of the \Frechet distance.

Given a set $\mathcal{P}$ of $n$ curves of complexity $m$ in $\mathbb{R}^d$, let $C$ be the $(1,\ell)$-center curve returned by some $\alpha$-approximation algorithm for the discrete \Frechet distance. Let $\delta=\max_{P\in \mathcal{P}} \distDF{C}{P}$. We construct $d$-dimensional balls centered at vertices of the curve $C$ with radius $\delta$. It holds that $\distDF{C}{P}\leq \delta$, $\forall P\in \mathcal{P}$, thus in each ball centered at the vertices of $C$ there has to be a vertex of each curve from $\mathcal{P}$. We choose at random one vertex from each of the $\ell$ balls, and connect them with line segments in the order of the vertices along $C$. We denote the curve we got with $M$, and claim that it is a good approximation of an unordered parameterized middle curve. See \figref{fig:approx} for an illustration of the algorithm.  

\begin{figure}
\centering
\includegraphics[page=8, width = .7\textwidth]{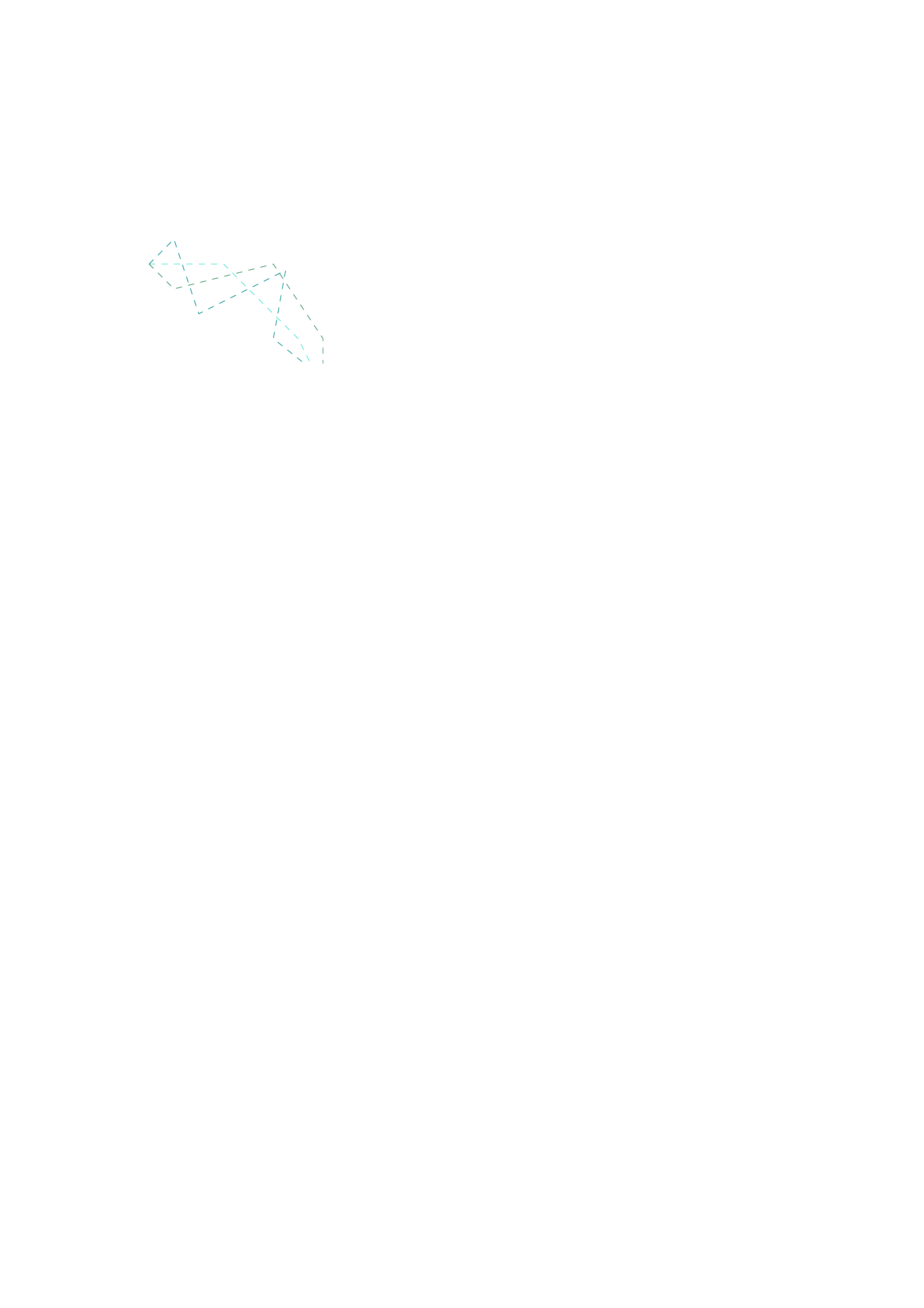}
\caption{Illustration of the approximation algorithm. The input curves are dashed and in shades of green, while the $(1,\ell)$-center approximation with distance $\delta$ is the full purple curve. The constructed middle curve is the red fat curve.}
\figlab{fig:approx}
\end{figure}

Let $C^*$ be an optimal $(1,\ell)$-center curve (for the discrete \Frechet distance) for the given input set $\mathcal{P}$. % with $d_{dF}(C,P) \leq \Delta^*$.
Let $\delta^* = \max_{P \in \mathcal P} \distDF{C^*}{P}$. It holds that $\delta \leq \alpha \delta^*$. 
%For every curve $P \in \mathcal{P}$ $d_{dF}(P, C) \leq \Delta$ and let $\delta^* = \max_{P \in \mathcal P} \distDF{C^*}{P}$. It holds that $\delta \leq \alpha \delta^*$
For each $P\in\mathcal{P}$ and each vertex of $P$, there is a vertex in $M$, that is at distance at most $2\delta$ (diameter of the ball both of them lie in). Thus there is a traversal of $P$ and $M$ with pairwise distance of the vertices at most $2\delta$, implying $\distDF{M}{P}\leq 2\delta$. We have %Using triangle inequality we have
$\distDF{M}{P} \leq 2\delta \leq 2 \alpha \delta^*$.
%Because the distance between a vertex in $C$ and its corresponding vertex in $M$ is at most the diameter of the balls, which is $2 \Delta$.

Let the optimal parameterized middle curve with complexity $\ell$ be $M^*$. By definition it holds that $\delta^* = \max_{P\in \mathcal{P}} \distDF{C^*}{P} \leq \max_{P\in \mathcal{P}} \distDF{M^*}{P}$. Thus 
\[ \distDF{M}{P}\leq 2\alpha \max_{P\in \mathcal{P}} \distDF{M^*}{P},\] 
and $M$ is a 2-approximation to the optimal parameterized middle curve. This implies:% \thmref{approx:param:1}.
%For each $P \in \mathcal{P}$ this means  As the \textsc{Middle Curve} problem is a restricted version of the $(1,l)$-\textsc{center} problem, the maximum distance of the optimal middle curve to a curve in $\mathcal{P}$ is at most $\Delta^*$. This leads to the approximation factor 3.

\begin{lemma}
\lemlab{approx:param:1}
Given a set of $n$ curves $\mathcal{P}$ each with complexity at most $m$, a $\delta > 0$ and an $\alpha$-approximation algorithm for $(k,\ell)$-center with running time $T$, we can compute a $2\alpha$-approximation of the \textsc{Parameterized Middle Curve} optimization problem for discrete \Frechet distance in $O(\ell mn + T)$ time.
\end{lemma}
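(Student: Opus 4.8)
The plan is to assemble the bound from the three inequalities that have already been established in the surrounding text, and then account for the running time. First I would run the given $\alpha$-approximation algorithm for the $(1,\ell)$-center problem on $\mathcal{P}$ under the discrete \Frechet distance; this costs $T$ time and yields a center curve $C$ of complexity at most $\ell$. Set $\delta = \max_{P \in \mathcal{P}} \distDF{C}{P}$. Since $C^*$ denotes an optimal $(1,\ell)$-center curve and $\delta^* = \max_{P \in \mathcal{P}} \distDF{C^*}{P}$, the approximation guarantee gives $\delta \le \alpha\delta^*$. I would then place a $d$-dimensional ball of radius $\delta$ around each of the (at most) $\ell$ vertices of $C$; because $\distDF{C}{P}\le\delta$ for every $P\in\mathcal{P}$, each such ball must contain at least one vertex of every $P$. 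Picking one input vertex from each ball and joining the chosen points in the order of the vertices along $C$ produces a curve $M$ of complexity at most $\ell$ whose vertices are vertices of input curves, so $M$ is a candidate middle curve.

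Next I would verify the distance bound for $M$. Fix $P\in\mathcal{P}$. For every vertex of $P$ there is a vertex of $M$ lying in the same ball, hence at distance at most $2\delta$ (the diameter). Using the traversal of $P$ and $M$ induced by this ball-correspondence — walking through the balls in order along $C$ — every matched pair is within $2\delta$, so $\distDF{M}{P}\le 2\delta$. Combining with $\delta\le\alpha\delta^*$ gives $\distDF{M}{P}\le 2\alpha\delta^*$ for all $P$. Finally, if $M^*$ is an optimal parameterized middle curve of complexity $\ell$, then since $M^*$ is in particular a $(1,\ell)$-center (its complexity is at most $\ell$), optimality of $C^*$ yields $\delta^* \le \max_{P\in\mathcal{P}}\distDF{M^*}{P}$. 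Chaining the inequalities, $\max_{P\in\mathcal{P}}\distDF{M}{P}\le 2\alpha\max_{P\in\mathcal{P}}\distDF{M^*}{P}$, which is exactly the claimed $2\alpha$-approximation.

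For the running time: the $(1,\ell)$-center call takes $T$; computing $\delta$ and, more to the point, locating for each of the $\ell$ vertices of $C$ one vertex of each of the $n$ curves inside the corresponding ball can be done by scanning all $nm$ input vertices once against the $\ell$ centers while the matchings realizing $\distDF{C}{P}$ are read off, i.e. $O(\ell m n)$; assembling $M$ from the $\ell$ chosen points is $O(\ell)$. Hence the total is $O(\ell m n + T)$, as stated.

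The main obstacle, and the only place the argument needs care, is the step $\distDF{M}{P}\le 2\delta$: one must be sure that the per-vertex ``same-ball'' correspondence actually extends to a valid monotone traversal of $P$ and $M$. This is where the ordering of $M$'s vertices along $C$ matters — the balls are visited in $C$'s order, the matching realizing $\distDF{C}{P}$ is monotone, and $M$'s vertices are placed in that same order, so the induced correspondence between vertices of $P$ and vertices of $M$ can be made monotone and thus is a legitimate traversal. Everything else is just concatenating the three inequalities and the obvious cost accounting.
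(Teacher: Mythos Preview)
Your proposal is correct and follows essentially the same approach as the paper: run the $(1,\ell)$-center approximation, snap each center vertex to an input vertex within its $\delta$-ball, and chain the three inequalities $\distDF{M}{P}\le 2\delta\le 2\alpha\delta^*\le 2\alpha\max_{P}\distDF{M^*}{P}$. You are in fact slightly more careful than the paper in explicitly arguing that the ball-correspondence yields a monotone traversal (the paper simply asserts the existence of such a traversal), but the overall structure and running-time accounting are identical.
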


Plugging the $(1+\eps)$-approximation algorithm of Buchin \etal \cite{bds-average} for $(k,\ell)$-center for discrete \Frechet distance into \lemref{approx:param:1}, we get

\begin{theorem}
Given a set of $n$ curves $\mathcal{P}$ each of complexity at most $m$, and a $\delta > 0$, we can compute a $(2 + \eps)$-approximation of the \textsc{Parameterized Middle Curve} optimization problem for discrete \Frechet distance in $\O{((c\ell)^{\ell}+\log(\ell+n))\ell mn}$ time, with $c = (\frac{4 \sqrt{d}}{\eps} + 1)^d$.
\end{theorem}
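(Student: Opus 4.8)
The plan is to instantiate \lemref{approx:param:1} with the $(1+\eps')$-approximation algorithm for the $(k,\ell)$-center problem under the discrete \Frechet distance due to Buchin \etal \cite{bds-average}, taking $k=1$ and choosing the accuracy parameter $\eps'$ appropriately.

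First I would recall from \lemref{approx:param:1} that an $\alpha$-approximation algorithm for $(1,\ell)$-center running in time $T$ yields a $2\alpha$-approximation of the (unordered) \textsc{Parameterized Middle Curve} optimization problem for the discrete \Frechet distance in time $O(\ell mn + T)$. Hence it suffices to obtain $\alpha$ arbitrarily close to $1$: setting $\eps' = \eps/2$ in the Buchin \etal algorithm gives $\alpha = 1+\eps/2$, so that $2\alpha = 2+\eps$, as required.

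Next I would substitute the running time of the Buchin \etal algorithm for $k=1$. For a set of $n$ curves of complexity at most $m$ in $\mathbb{R}^d$, their algorithm computes a $(1+\eps')$-approximate $(1,\ell)$-center for the discrete \Frechet distance in time $T = \O{((c\ell)^{\ell} + \log(\ell+n))\,\ell mn}$, where the factor coming from $\eps' = \eps/2$ is absorbed into the constant so that $c = (4\sqrt{d}/\eps + 1)^d$. Since $T \ge \ell mn$, the additive $O(\ell mn)$ term in \lemref{approx:param:1} is dominated by $T$, and the total running time is $O(T)$, which is exactly the claimed bound.

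The only points that need care are bookkeeping ones: verifying that the $(1,\ell)$-center subroutine is invoked strictly as a black box in \lemref{approx:param:1}, so that the freedom of its center curve to leave the input vertices is harmless (the rounding step in the proof of that lemma handles the restriction to input vertices, and the inequality $\delta^*\le\max_{P\in\mathcal{P}}\distDF{M^*}{P}$ holds precisely because an optimal middle curve is itself a feasible $(1,\ell)$-center); and tracking how $\eps'$ and the dimension $d$ enter the constant $c$ in the cited time bound so that the stated form is obtained. I do not expect a genuine mathematical obstacle here — the whole content lies in \lemref{approx:param:1} and in the cited $(k,\ell)$-center algorithm, and this theorem is just their composition.
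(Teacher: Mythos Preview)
Your proposal is correct and matches the paper's approach exactly: the theorem is obtained by plugging the $(1+\eps)$-approximation algorithm of Buchin \etal \cite{bds-average} for the $(k,\ell)$-center problem (with $k=1$) into \lemref{approx:param:1}, and the paper states precisely this one-line derivation without further elaboration. Your additional care in setting $\eps'=\eps/2$ and noting that the $O(\ell mn)$ term is absorbed into $T$ is sound bookkeeping that the paper leaves implicit.
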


\section{Conclusion}

We showed that the \textsc{Middle Curve} problem is NP-complete and gave a $(2+\eps)$-approximation for the \textsc{Parameterized Middle Curve} problem, parameterized in the complexity of the middle curve. It would be interesting to gain further insight into the complexity of the parameterized problem. Fixing the parameter in the brute-force algorithm gives an XP-algorithm, however it remains open whether \textsc{Parameterized Middle Curve} is in FPT. 

\bigskip

\hspace{-5.5mm}\textbf{Acknowledgements.} This work is based on the student research project by the second author Nicole Funk.

\bibliography{middlecurveEuroCG}

\end{document}